\theoremstyle{plain}
\newtheorem{theorem}{Theorem}[section]
\newtheorem{lemma}[theorem]{Lemma}
\newtheorem{proposition}[theorem]{Proposition}
\newtheorem*{theorem*}{Theorem}
\newtheorem*{claim*}{Claim}
\newtheorem*{claims*}{Claims}
\newtheorem*{lemma*}{Lemma}
\newtheorem*{corollary*}{Corollary}
\newtheorem*{definition*}{Definition}
\newcommand{\eps}{\varepsilon}
\date{}
\title{\vspace{-0.7cm} Unique reconstruction threshold for random jigsaw puzzles}
\author{
Rajko Nenadov \thanks{Institute of
Theoretical Computer Science, ETH Z\"urich, 8092 Z\"urich,
Switzerland. Emails: {\tt rnenadov@inf.ethz.ch}, {\tt ppfister@inf.ethz.ch}, {\tt steger@inf.ethz.ch}. } 
\and Pascal Pfister $^*$
\and Angelika Steger$^*$ 
 }
\begin{document}
\maketitle

\begin{abstract}
A random jigsaw puzzle is constructed by arranging $n^2$ square pieces into an $n \times n$ grid and assigning to each edge of a piece one of $q$ available colours uniformly at random, with the restriction that touching edges receive the same colour. We show that if $q = o(n)$ then with high probability such a puzzle does not have a unique solution, while if $q \ge n^{1 + \varepsilon}$ for any constant $\varepsilon > 0$ then the solution is unique. This solves a conjecture of Mossel and Ross ({\em Shotgun assembly of labeled graphs}, arXiv:1504.07682).
\end{abstract}

\section{Introduction}
In this paper we study the \emph{random jigsaw puzzle} problem posed by Mossel and Ross \cite{mossel2015shotgun}. Suppose we are given a square wooden board 1cm thick. One way to make a puzzle out of such a board is as follows:
\begin{enumerate}[1.]
	\item Cut the board into $n^2$ smaller squares of the same size. We call these squares \emph{pieces}.
	\item Arrange the pieces into an $n \times n$ grid and draw a unique, non-symmetrical piece of art on each of them. The resulting big picture is referred to as the \emph{puzzle picture}.
	\item Colour each side (other than the ones facing up and down) of a piece  with one of the $q$ available colours such that the touching sides of two adjacent pieces receive the same colour. 
\end{enumerate}
Suppose now that we shuffle all the pieces and give them to a friend. This friend would like to find out what the puzzle picture is, and the only way she can do it is by arranging the pieces back into an $n \times n$ grid such that the touching sides of each two adjacent pieces have the same colour (we call such an arrangement \emph{valid}). Here we allow each piece to be arbitrarily rotated, however, once arranged in the grid, each piece should be placed such that the drawing on it is visible (i.e. flips are not allowed). This is slightly more general than the model used in \cite{mossel2015shotgun} where no rotations were allowed.

Note that each puzzle created in the above manner contains a valid arrangement, namely the original construction. However, there might be other valid arrangements which, due to the uniqueness and non-symmetry of drawings on pieces, necessarily produce a picture different than the puzzle picture. For example, if $q = 1$, then in fact any arrangement is valid. As the puzzle picture might contain a secret message for our friend, we would like to design a puzzle such that there exists a unique valid arrangement. We make this precise in the next section. A simple way to make a puzzle with a unique solution is to have $q = 2n^2 + 2n$ colours available and use each colour exactly once. However, in this case it is computationally trivial to reconstruct the puzzle picture by simply starting with an arbitrary piece and then putting down any valid piece in each next step. Therefore, to make the puzzle interesting, we would like that (i) there exists exactly one valid arrangement and (ii) the set of colours is not `too large' or, alternatively, most of the colours are used a significant number of times.

It turns out that a \emph{random jigsaw puzzle} satisfies these two properties. A random jigsaw puzzle is created by colouring sides of pieces independently and uniformly at random using one of the $q$ colours, respecting that once a side has been coloured the corresponding side of the adjacent piece gets the same colour. Mossel and Ross \cite{mossel2015shotgun} studied the following question: how large does $q$ have to be to ensure that a random jigsaw puzzle has with high probability (w.h.p, meaning with probability $1-o(1)$ as $n\rightarrow \infty$) only one possible arrangement? They showed that the probability of having a unique reconstruction goes to $0$ if $q = o(n^{2/3})$ and that it goes to $1$ if $q = \omega(n^{2})$ (as $n \rightarrow \infty$). Furthermore, they conjectured that there exists a constant $c$ such that for all $\eps  > 0$ the probability of unique reconstruction is $o(1)$ (resp.\  $1-o(1)$) if  $q \le n^{c-\eps}$ (resp.\ $q \ge n^{c+\eps}$). From Mossel~\cite{MosselPC} we  learned that he and Uri Feige  conjecture that $c=1$ is the correct value.
We answer this conjecture in the affirmative. 

\begin{theorem} \label{thm:Jigsaw}
Let $\varepsilon > 0$ be a constant. If $q = o(n)$ (resp. $q \ge n^{1 + \varepsilon}$), then the probability that a random jigsaw puzzle has a unique reconstruction (up to rotation of the whole grid) goes to $0$ (resp. $1$) as $n \rightarrow \infty$.
\end{theorem}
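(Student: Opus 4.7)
The plan is to prove the two directions of Theorem~\ref{thm:Jigsaw} separately.

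\emph{Non-uniqueness for $q = o(n)$.} I would exhibit, with high probability, two non-adjacent pieces whose cyclic colour 4-tuples agree up to some rotation; swapping them, with the appropriate rotation, yields a second valid arrangement that differs from the original since the drawings are non-symmetric. For two non-adjacent positions the eight random edge-colours defining the two 4-tuples are independent and uniform in $[q]$, so the probability that they agree up to cyclic rotation is $\Theta(1/q^4)$. Letting $X$ count such pairs, $\E[X] = \Theta(n^4/q^4) \to \infty$ when $q = o(n)$. The dominant contribution to $\E[X^2]$ comes from ordered pairs of matching pairs involving four mutually non-adjacent pieces, whose sixteen defining edges are independent, contributing $\Theta(\E[X]^2)$; configurations with a shared piece or shared edge contribute only $O(n^6/q^8) = o(\E[X]^2)$. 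Chebyshev's inequality then gives $X \geq 1$ w.h.p.

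\emph{Uniqueness for $q \geq n^{1+\eps}$.} I would union-bound over all valid alternative arrangements (excluding the three non-trivial global rotations), indexed by their \emph{difference set} $D \subseteq [n]^2$, the positions where the alternative disagrees with the original. Since every piece appears exactly once, the alternative restricted to $D$ is a permutation of the original pieces at $D$ together with rotations. Fixing such a non-trivial $\sigma$, each of the $4|D| - e(D)$ grid edges incident to $D$ (where $e(D)$ counts grid edges with both endpoints in $D$) imposes a colour-equality constraint between two generically distinct random edge-colours, so
\begin{equation*}
  \Pr[\sigma \text{ yields a valid arrangement}] \leq q^{-(4|D| - e(D))}.
\end{equation*}
Using $|\{\sigma\}| \leq |D|!\, 4^{|D|}$ and the planar isoperimetric bound $e(D) \leq 2|D| - c\sqrt{|D|}$ for some absolute $c > 0$, the expected number of valid non-trivial alternatives is at most
\begin{equation*}
  \sum_{k=2}^{n^2} \binom{n^2}{k}\, k!\, 4^k\, q^{-2k - c\sqrt{k}} \;\leq\; \sum_{k \geq 2} (4 n^2 / q^2)^k \;=\; \sum_{k \geq 2} (4/n^{2\eps})^k \;=\; o(1),
\end{equation*}
and Markov's inequality concludes.

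\emph{Main obstacle.} The critical step is to justify the probability bound $q^{-(4|D|-e(D))}$: the validity equations may involve overlapping pairs of edge-colours, and for highly structured permutations (block shifts, long cycles propagating colour equalities around the grid) the rank of the resulting linear system over $\Z_q$ could drop, inflating the success probability. I would handle this by a greedy exposure argument, processing cells of $D$ along a spanning BFS tree and showing that each new cell typically forces at least two fresh independent colour equations, with any local shortfall compensated by a matching decrease in the combinatorial count of $\sigma$. Grid-boundary cells (with fewer than four neighbours) contribute only an $O(n)$ additive correction to the constraint count, absorbed by the exponential slack $q^{\eps |D|}$.
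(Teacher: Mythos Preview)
Your argument for $q=o(n)$ is sound; it differs from the paper only in that you use a second-moment calculation where the paper restricts to the ``chessboard'' set of mutually non-adjacent cells so that the relevant $4$-tuples become independent and a direct birthday computation suffices.

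The uniqueness direction, however, has a genuine gap: the inequality $\Pr[\sigma\text{ valid}]\le q^{-(4|D|-e(D))}$ is false, not merely hard to justify. Take $D$ to be a length-$k$ horizontal segment in the interior of the grid and let $\sigma$ be the cyclic shift by one along it (no rotations). Then $4|D|-e(D)=3k+1$, but the validity probability is exactly $q^{-2k}$: all but one of the internal edges of the rearrangement are \emph{original} (the shift places each piece next to its former neighbour), the $k$ top constraints collapse to ``the $k$ top colours are equal'' (rank $k-1$), likewise for the bottom, and the three remaining horizontal constraints have rank $2$. Your bound is therefore off by a factor $q^{k+1}$ in the wrong direction, and no isoperimetric refinement of $e(D)$ can repair this. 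You correctly flag the rank drop as the main obstacle, but your proposed fix --- greedy exposure along a BFS tree, with any shortfall absorbed into a reduced count of $\sigma$ --- is exactly where all the content lies, and the sketch provides no mechanism for the compensation. In the shift example the rank is only $2k$ while your enumeration still charges $k!\,4^k$; the saving must come from recognising that such structured $\sigma$ are rare, which requires a classification you have not begun.

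The paper sidesteps the global union bound entirely. It isolates five bounded-size local \emph{configurations} (a straight border segment, a run of convex corners, a small hole, a small indentation at the grid boundary, and a $K\times K$ window containing only small stable blocks), and uses two deferred-decision lemmas that extract a factor $q^{-1}$ or $q^{-1/2}$ per carefully ordered new edge to show each configuration type has expected count $o(1)$. A separate geometric argument (via a concave/convex corner-counting lemma for connected grid subsets) then shows that any valid non-trivial reconstruction must contain at least one of these configurations. The localisation to constant-size patterns is precisely what tames the dependency structure that defeats your direct approach.
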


Our proof is elementary, essentially  based only on first moment arguments. We refer the reader to \cite{alon2015probabilistic} for  an introduction to probabilistic tools.
Note that a simple calculation shows that for $q = n^{1 + \varepsilon}$ all colours are used asymptotically the same number of times, thus giving a simple randomized procedure for creating a puzzle satisfying (i) and (ii).

\section{Notation and preliminaries}

%
%

We first introduce some notation. A jigsaw puzzle is given by an $n \times n$ grid $G$ of \emph{puzzle pieces}, which we denote by~$V$. We use $v_{i,j} \in V$ to denote the piece corresponding to the location $(i,j)$ in the grid. Each piece is enclosed by four \emph{half-edges} (top, right, bottom and left side of the piece). Two \emph{adjacent} pieces share a pair of \emph{adjacent} half-edges (e.g. the top half-edge of the lower piece and the bottom half-edge of the upper piece), which we call an \emph{edge}. Observe that half-edges along the border of the grid have no other half-edge adjacent to them but, for simplicity, we assume that these `single' half-edges form an edge as well.  We denote by $E_G$ the set of edges in the grid $G$. Given a set $Q$, a {\em colouring} of $G$ is a function  $g: E_G \rightarrow Q$ that assigns colours to edges and, in turn, to half-edges in the natural way (i.e. half-edges corresponding to an edge inherit the colour). Here we consider the case where $g$ is a random function: the colour of each edge is chosen independently and uniformly at random from $Q$.


Given an $n \times n$ grid $G$ and a colouring function $g$, one can do two things to reassemble the puzzle in a non-trivial way: reorder the pieces and rotate them. A reordering of the pieces corresponds to a bijection $\phi \colon n \times n \rightarrow n \times n$. A rotation of a piece corresponds to a cyclic shift of its edges. Therefore, we can describe rotations of all pieces with a function $\Pi$ which assigns any piece $v_{i,j} \in V$ a cyclic shift $\pi_{i,j}$. Any two such functions then define another $n \times n$ grid  of pieces $H = G(\phi, \Pi)$, which we call a \emph{reconstruction}. For a given reconstruction $H$, we denote by $E_H$ the set of its edges (recall that an edge is composed of the half-edges of adjacent pieces). We say that $H$ is a \emph{valid} reconstruction if all edges in $E_H$ are monochromatic, that is, every two half-edges corresponding to an edge have the same colour. Using this notation, Theorem \ref{thm:Jigsaw} can be formally stated as follows. 

\begin{theorem*}[Theorem \ref{thm:Jigsaw} restated]
Let $\varepsilon > 0$ be a constant. If $q = o(n)$ (resp. $q \ge n^{1 + \varepsilon}$) and $g:E_G \rightarrow Q$ is a random colouring of an $n \times n$ jigsaw puzzle $G$ with $|Q| = q$ colours, then w.h.p there exists (resp. does not exist) a valid reconstruction $H = G(\phi, \Pi)$ such that $E_H \neq E_G$, up to rotation of the whole grid.
\end{theorem*}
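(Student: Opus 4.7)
Both halves of the theorem should fall to (primarily) first-moment computations on the random variable $X$ counting valid reconstructions $H = G(\phi,\Pi)$ with $E_H \ne E_G$, taken modulo the four rotations of the whole grid.

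For the lower bound $q = o(n)$, I plan to consider the family of reconstructions obtained from $G$ by rotating a single piece $v_{i,j}$ in place by $180^{\circ}$. Such an $H$ is valid exactly when the top and bottom half-edges of $v_{i,j}$ have equal colours and its left and right half-edges have equal colours. These two events are independent equalities of independent uniform colours, so any fixed piece admits such a rotation with probability $1/q^2$ (border and corner pieces work out identically, since the `single' half-edges there impose no constraint). The expected number of valid single-piece $180^{\circ}$ rotations is therefore $\Theta(n^2/q^2) = \omega(1)$ when $q = o(n)$. A direct covariance calculation shows that the corresponding indicator variables are pairwise uncorrelated, even for adjacent pieces: conditioning on the colour $c$ of their shared edge, each of the remaining four equality conditions involves a distinct edge of $G$ and has probability $1/q$, matching $(1/q^2)^2$ exactly. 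Hence the variance is at most the mean, and Chebyshev's inequality gives $\Pr[X = 0] = o(1)$.

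For the upper bound $q \ge n^{1+\varepsilon}$, the plan is to show $\E[X] = o(1)$. Given a reconstruction $H$, each edge $e \in E_H \setminus E_G$ equates the colours of two distinct edges of $G$; since all edge colours of $G$ are i.i.d.\ uniform, one obtains the identity $\Pr[H \text{ valid}] = q^{-r(H)}$, where $r(H)$ is the rank of the resulting system (the size of a spanning forest of the constraint graph whose vertices are the edges of $G$ and whose edges are the new edges of $H$). Parameterising reconstructions by the disagreement set $S = \{(i,j) : \phi(i,j) \ne (i,j) \text{ or } \pi_{i,j} \ne 0\}$, there are at most $s! \cdot 4^s$ reconstructions sharing a given $S$ of size $s$, and so
\[
\E[X] \;\le\; \sum_{s \ge 1} \binom{n^2}{s} \cdot s! \cdot 4^s \cdot q^{-r_{\min}(s)} \;\le\; \sum_{s \ge 1} (4 n^2)^s \cdot q^{-r_{\min}(s)}.
\]

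The main obstacle is to establish the near-tight rank bound $r_{\min}(s) \ge (2 - o(1))\, s$, which reduces the sum above to $\sum_s (4 n^2 / q^{2-o(1)})^s = o(1)$ under $q \ge n^{1+\varepsilon}$. Intuitively, the $4s$ half-edges of pieces in $S$ now sit on roughly $2s$ new edges of $H$, each contributing an independent constraint; a small case (two adjacent pieces swapped) already exhibits the phenomenon, giving $r = 4 = 2 \cdot 2$. The enemy is the rigid move, in which a large sub-block of pieces is translated or rotated as a unit: its internal edges are preserved, so only the $O(\sqrt{s})$ boundary edges are new. I plan to handle rigid moves separately---there are only $O(n^2)$ of them per block shape, so their contribution $O(n^2) \cdot q^{-\Omega(\sqrt{s})}$ is still $o(1)$ under $q \ge n^{1+\varepsilon}$---and to argue, by a combinatorial analysis of the constraint graph, that every other reconstruction satisfies $r(H) \ge (2-o(1))|S|$.
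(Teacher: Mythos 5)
Your lower bound argument via single-piece $180^\circ$ rotations is correct, and it differs from the paper's approach: the paper uses a birthday-paradox argument to find two non-adjacent pieces with identical colour $4$-tuples that can be swapped (which also covers the rotation-free model), whereas you exploit rotations directly. Your covariance computation for adjacent pieces checks out, so the Chebyshev step is sound. This is a perfectly valid alternative for the model in the paper.

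The upper bound, however, has a genuine gap. The proposed bound
\[
\E[X] \;\le\; \sum_{s\ge 1} (4n^2)^s \, q^{-r_{\min}(s)}
\]
requires that the rank of the constraint system satisfy something close to $r_{\min}(s)\ge (2-o(1))s$ for essentially all reconstructions with disagreement set of size $s$, with a small list of exceptional ``rigid'' moves handled separately. This dichotomy is false, and the problematic cases are not the ones you flag. Consider swapping two adjacent full rows: the disagreement set has size $s=2n$, yet the constraint system has rank exactly $s$ (for each column $j$ one gets two constraints among three edge colours, for a total of $2n$ independent constraints). This is neither a ``rigid move'' in your sense (its rank is $\Theta(s)$, not $\Theta(\sqrt s)$) nor does it satisfy $r\ge (2-o(1))s$. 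Plugging $r=s$ into your bound gives $(4n^2/q)^s$, which \emph{diverges} for every $q\le n^{2-\delta}$, including the entire range $q\ge n^{1+\varepsilon}$ with $\varepsilon<1$. Worse, cyclic row-shifts give $s=\Theta(n^2)$ with rank only $\Theta(n)=\Theta(\sqrt s)$, and these are not translations of a compact block either. There is in fact a whole continuum of reconstructions interpolating between rank $\Theta(\sqrt s)$ and rank $\Theta(s)$, and the count of reconstructions at a given rank level is far below the crude $(4n^2)^s$. Your argument as stated has no mechanism to exploit this; the blanket bound is simply too lossy.

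What the paper does instead is refuse to work globally. It introduces a local structure (a ``stable set'': a maximal block whose internal edges are all original and whose boundary edges are all new), and shows that a stable set that is neither tiny nor the whole grid must exhibit one of a handful of bounded-size local patterns along its boundary (a long straight run, a cluster of convex corners, a hole, or an indentation near the border). Each such pattern involves only $O(1)$ pieces, so a first-moment bound over $O(n^{2m})$ placements suffices, and the probability per placement is driven down by carefully ordering the revelation of half-edge colours (Propositions~\ref{prop:GoodEdges} and~\ref{prop:NewEdgePairs}) so that one genuinely collects a factor $q^{-1}$ or $q^{-1/2}$ per new edge despite the dependencies you correctly identify. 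A final $\mathtt{subsquare}$ configuration rules out reconstructions made entirely of small stable sets. The crucial structural input is Lemma~\ref{lemma:observations} on convex versus concave corners of connected grid sets, which forces the boundary of any nontrivial stable set to contain one of the forbidden local patterns. Your proposal is missing exactly this local decomposition; without it, there is no route from the first-moment calculation to $\E[X]=o(1)$, because the na\"ive sum over disagreement sets does not converge.
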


For any two sets $S, T \subset V$ we denote by $E_G(S,T)$ the edges for which one half-edge belongs to a piece of $S$ and the other belongs to a piece of $T$, and we define $E_G(T) := E_G(T,T)$. We call $T$ a connected set if the vertices in $T$ induce a connected subgraph of the grid, where connectivity is defined as in the graph theoretic setting. 
Figure \ref{fig:defs} illustrates this and some more basic definitions  that we use throughout the paper.

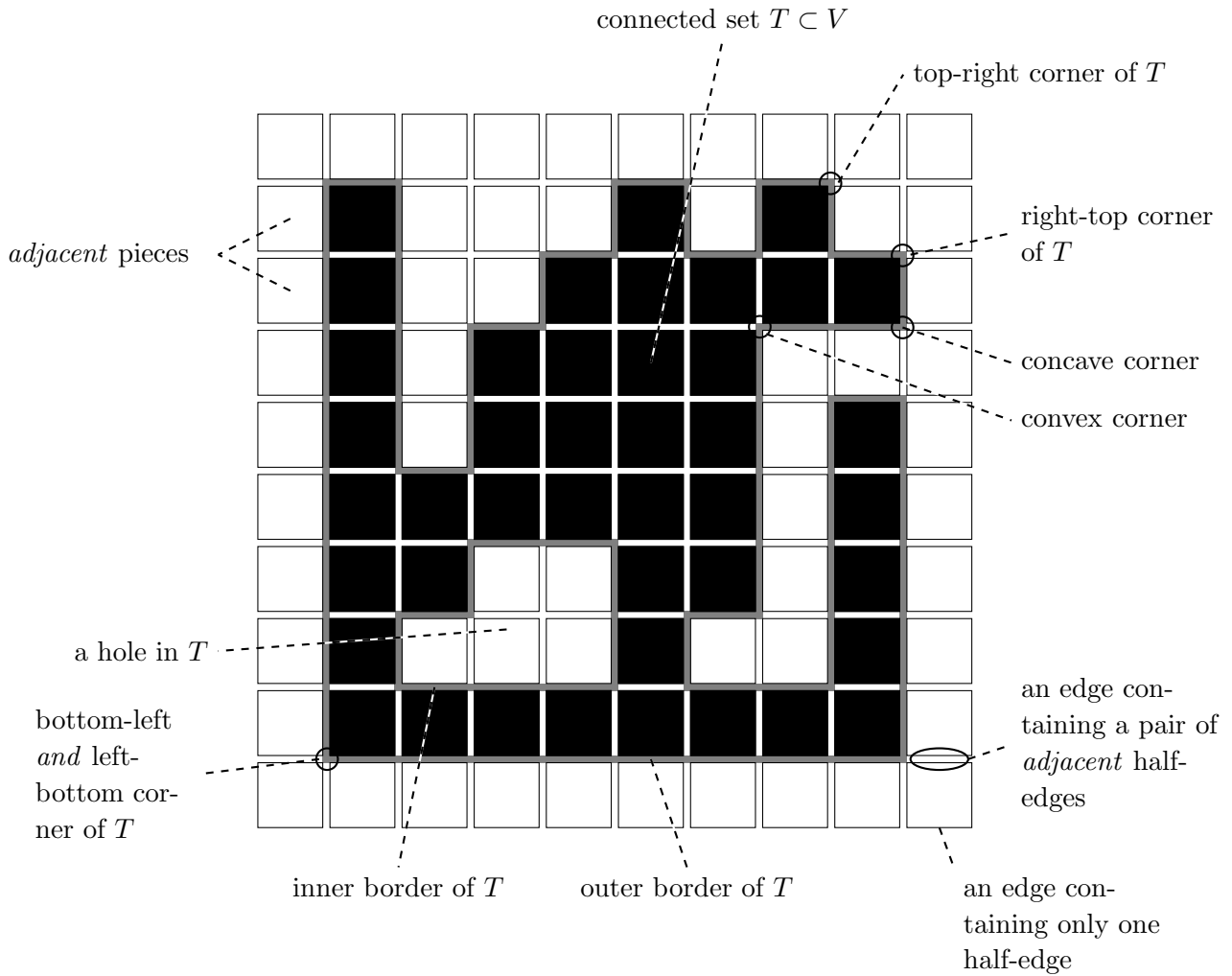
\begin{figure}[h!]
\begin{center}
\begin{tikzpicture}


\path   (2.05,2.05) rectangle (2.95, 2.95);	
\foreach \x in {3,4}{
	\foreach \y in {2,3}{
		\path  (\x+0.05,\y+0.05) rectangle (\x + 0.95, \y + 0.95);		
	}
}

\path [fill=gray] (1.95,1.95) rectangle (5.05, 2.05);
\path [fill=gray] (4.95,1.95) rectangle (5.05, 4.05);
\path [fill=gray] (5.05,4.05) rectangle (2.95, 3.95);
\path [fill=gray] (3.05,4.05) rectangle (2.95, 2.95);
\path [fill=gray] (3.05,3.05) rectangle (1.95, 2.95); 
\path [fill=gray] (2.05,3.05) rectangle (1.95, 1.95);

\path [fill=gray] (0.95,0.95) rectangle (9.05, 1.05);
\path [fill=gray] (8.95,0.95) rectangle (9.05, 6.05);
\path [fill=gray] (9.05,6.05) rectangle (7.95, 5.95);
\path [fill=gray] (8.05,1.95) rectangle (7.95, 5.95);
\path [fill=gray] (8.05,1.95) rectangle (5.95, 2.05); 
\path [fill=gray] (5.95,1.95) rectangle (6.05, 3.05);
\path [fill=gray] (5.95,2.95) rectangle (7.05, 3.05);  
\path [fill=gray] (6.95,2.95) rectangle (7.05, 7.05); 
\path [fill=gray] (6.95,6.95) rectangle (9.05, 7.05); 
\path [fill=gray] (8.95,6.95) rectangle (9.05, 8.05); 
\path [fill=gray] (9.05,8.05) rectangle (7.95, 7.95);
\path [fill=gray] (7.95,7.95) rectangle (8.05, 9.05); 
\path [fill=gray] (8.05,9.05) rectangle (6.95, 8.95);
\path [fill=gray] (7.05,9.05) rectangle (6.95, 7.95);
\path [fill=gray] (7.05,8.05) rectangle (5.95, 7.95);
\path [fill=gray] (5.95,7.95) rectangle (6.05, 9.05); 
\path [fill=gray] (6.05,9.05) rectangle (4.95, 8.95);
\path [fill=gray] (5.05,9.05) rectangle (4.95, 7.95);
\path [fill=gray] (5.05,8.05) rectangle (3.95, 7.95);
\path [fill=gray] (4.05,8.05) rectangle (3.95, 6.95);
\path [fill=gray] (4.05,7.05) rectangle (2.95, 6.95);
\path [fill=gray] (3.05,7.05) rectangle (2.95, 4.95);
\path [fill=gray] (3.05,5.05) rectangle (1.95, 4.95);
\path [fill=gray] (1.95,4.95) rectangle (2.05, 9.05);
\path [fill=gray] (2.05,9.05) rectangle (0.95, 8.95);
\path [fill=gray] (1.05,9.05) rectangle (0.95, 0.95);

\foreach \x in {0,1,2,3,4,5,6,7,8,9}{
	\foreach \y in {0,1,2,3,4,5,6,7,8,9}{
		\draw  (\x+0.05,\y+0.05) rectangle (\x + 0.95, \y + 0.95);		
	}
}

\foreach \x in {1,5,7} {
	\draw [fill] (\x +0.05,8.05) rectangle (\x +0.95,8.95);
}
\foreach \x in {1,4,5,6,7,8} {
	\draw  [fill] (\x +0.05,7.05) rectangle (\x +0.95,7.95);	
}
\foreach \x in {1,3,4,5,6} {
	\draw [fill] (\x +0.05,6.05) rectangle (\x +0.95,6.95);
}
\foreach \x in {1,3,4,5,6,8} {
	\draw [fill] (\x +0.05,5.05) rectangle (\x +0.95,5.95);
}
\foreach \x in {1,2,3,4,5,6,8} {
	\draw [fill] (\x +0.05,4.05) rectangle (\x +0.95,4.95);
}
\foreach \x in {1,2,5,6,8} {
	\draw [fill] (\x +0.05,3.05) rectangle (\x +0.95,3.95);
}
\foreach \x in {1,5,8} {
	\draw [fill] (\x +0.05,2.05) rectangle (\x +0.95,2.95);
}
\foreach \x in {1,2,3,4,5,6,7,8} {
	\draw [fill] (\x +0.05,1.05) rectangle (\x +0.95,1.95);
}


\draw [thick] (7.01,7) circle [radius=0.15];
\draw [thick, white] (7.1,6.9) -- (10.5,5.7);
\draw [thick, dashed] (7.1,6.9) -- (10.5,5.7);
\draw (10.5,5.7) node[right]  {convex corner};

\draw [thick] (8.99,7.99) circle [radius=0.15];
\draw [thick, dashed] (9.1,8) -- (10.5,8.3);
\draw (10.5,8.3) node[right,text width = 2.8 cm]  {right-top corner of $T$};

\draw [thick] (7.99,8.99) circle [radius=0.15];
\draw [thick, dashed] (8.1,9) -- (9,10.5);
\draw (9,10.5) node[right]  {top-right corner of $T$};

\draw [thick] (1.01,1) circle [radius=0.15];
\draw [thick, dashed] (0.9,1) -- (-0.7,0.8);
\draw (-0.5,0.8) node[left, text width = 2.4cm]  {bottom-left \emph{and} left-bottom corner of $T$};

\draw [thick] (8.99,6.99) circle [radius=0.15];
\draw [thick, dashed] (8.99,6.99) -- (10.5,6.5);
\draw (10.5,6.5) node[right]  {concave corner};

\draw [white, thick] (3.5,2.8) -- (-0.5,2.5);
\draw [thick, dashed] (3.5,2.8) -- (-0.5,2.5);
\draw (-0.5,2.5) node[left]  {a hole in $T$};

\draw [white, thick] (5.5,1) -- (6,-0.5);
\draw [thick, dashed] (5.5,1) -- (6,-0.5);
\draw  (6,-0.5) node[below]  {outer border of $T$};

\draw [white, thick] (2.5,2) -- (2,-0.5);
\draw [thick, dashed] (2.5,2) -- (2,-0.5);
\draw (2,-0.5) node[below]  {inner border of $T$};

\draw [thick, dashed] (0.5, 8.5) -- (-0.5,8);
\draw [thick, dashed] (0.5, 7.5) -- (-0.5,8);
\draw (-0.5,8) node[left, text width = 2.8 cm] {\emph{adjacent} pieces};

\draw [thick] (9.5,1) ellipse (0.4cm and 0.15cm);
\draw [thick, dashed] (9.9, 1) -- (10.5,1.2);
\draw (10.5,1.2) node[right, text width=2.8cm]  {an edge containing a pair of \emph{adjacent} half-edges};

\draw [thick, dashed] (9.5, 0.05) -- (9.7, -0.5);
\draw (9.7, -0.5) node[below right, text width=2.8cm]  {an edge containing only one half-edge};

\draw  [thick, white] (5.5, 6.5) -- (6.5, 11);
\draw  [thick, dashed] (5.5, 6.5) -- (6.5, 11);
\draw (6.5, 11) node[above, text width=3.5cm]  {connected set $T \subset V$};

\end{tikzpicture}
\end{center}
\caption{A $10 \times 10$ grid $G$ with a connected set $T$ $\subset V$}
\label{fig:defs}
\end{figure}

Our first lemma states some basic observations about connected sets. 

\begin{lemma}\label{lemma:observations}
Let $T$ be a connected set of a given $n \times n$ grid $G$ and let $R$ be the bounding rectangle of $T$. Then the following holds:
\begin{enumerate}[$(i)$]
\item The outer border of $T$ contains exactly four more concave corners than convex corners.
\item Consider the left side of $R$ and the part of $T$'s border corresponding to this side, i.e. the border part between the top-left and bottom-left corner (in counter-clockwise direction, see figure \ref{fig:Lemma}). Then this part of $T$'s border contains the same number of concave and convex corners.  
\end{enumerate}
\end{lemma}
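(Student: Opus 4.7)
The plan for both parts is to regard the outer border of $T$ as a simple closed orthogonal polygon and to invoke the classical identity that a counter-clockwise traversal of such a curve has total signed turning angle $+360^{\circ}$. Since $T$ is a connected subgraph of the grid, its outer border really is a single simple closed polygonal curve with axis-aligned edges, turning by exactly $\pm 90^{\circ}$ at each of its vertices. Using the labelled corners in Figure~\ref{fig:defs} to pin down the paper's sign convention, a concave corner is a vertex where exactly one of the four grid cells meeting that vertex lies in $T$, and corresponds to a $+90^{\circ}$ left turn of the counter-clockwise walker; a convex corner has exactly three of the four cells in $T$ and corresponds to a $-90^{\circ}$ right turn.

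Part~(i) is then immediate: summing the turns around the outer border gives $90^{\circ}(c_c - c_v) = 360^{\circ}$, where $c_c$ and $c_v$ denote the numbers of concave and convex corners on the outer border, so $c_c - c_v = 4$.

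For~(ii), I would consider the portion of the outer border traversed counter-clockwise from the top-left to the bottom-left corner of $T$, interpreting ``contains'' in the statement as excluding the two endpoint corners themselves. Both endpoints are themselves concave: by their definition only one of the four grid cells meeting each of them lies in $T$, and the $+90^{\circ}$ turns at these corners are precisely what change the walker's direction from ``leftward along the top of $R$'' to ``downward along the left of $R$'' at the top-left, and from ``downward along the left of $R$'' to ``rightward along the bottom of $R$'' at the bottom-left. Excluding these endpoint turns, the walker therefore enters and exits the portion heading downward, so the net turning along it is $0^{\circ}$, and the same sign accounting as in~(i) now yields $c_c = c_v$ on this portion. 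The argument is entirely elementary; the only real subtleties are book-keeping --- keeping the sign convention straight (the paper's convex/concave are opposite of the standard interior-angle convention for $T$) and recognising that the extremal corners delimiting each side of $R$ serve as the transitions between the four side-portions of the outer border.
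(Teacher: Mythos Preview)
Your proof is correct, and it takes a genuinely different route from the paper. For part~$(i)$ the paper argues by induction on the length of the outer border, repeatedly ``filling in'' indentations of the bounding rectangle and using the induction hypothesis on the smaller border; for part~$(ii)$ it fills in all indentations on the complementary portion of the border, applies part~$(i)$ to the resulting set, and reads off the corner balance. Your argument instead views the outer border as a simple closed orthogonal curve and invokes the total-turning identity, which handles~$(i)$ in one line and reduces~$(ii)$ to identifying the entry and exit directions of the left portion. This is cleaner and avoids the induction entirely; the paper's approach, on the other hand, is self-contained and does not appeal to the winding-number fact.

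One small point worth tightening in your part~$(ii)$: from ``the walker enters and exits heading downward'' you conclude that the net turning is $0^\circ$, but strictly this only gives $0^\circ \pmod{360^\circ}$. To rule out $\pm 360^\circ$ you can, for instance, close up the left portion to a simple closed curve by appending a rectangular path that runs outside $R$ (down from the bottom-left corner of $T$, far to the left, up above $R$, and back down to the top-left corner of $T$); this closure contributes exactly $+360^\circ$, and since the resulting simple closed curve has total turning $+360^\circ$, the left portion itself has turning exactly $0^\circ$. Your identification of the two endpoint corners as concave (each meeting exactly one cell of $T$) is correct and matches the paper's reading of the statement, which also excludes the endpoint corners from the count.
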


\begin{proof}
\noindent $(i)$: We will prove this statement by induction on the length of the outer border. As induction basis, note that a connected set consisting of one piece has an outer border with 4 concave corners and no convex corner. For the induction step, let $T$ be a connected set with an outer border of size $k$. By induction assumption we know that the outer border of any connected structure with border size less than $k$ contains four more concave than convex corners. 

Let us consider the bounding rectangle $R$ of $T$. Whenever the border of $T$ `leaves' the border of $R$ and `comes back again', an indentation is formed (see Figure \ref{fig:Lemma}). If this leaving and coming back happens on the same side of $R$ (see Figure \ref{fig:Lemma}, indentation $I$), we know by induction that the border of the indentation contains four more concave than convex corners. Note that all convex corners of the indentation border are concave corners of the border of $T$ and vice versa, except for the two corners incident to the border of $R$, which are convex corners of $T$ and of the indentation. Hence, there are as many concave as convex corners in the border part between the two corners where the border of $T$ leaves the border of $R$ and comes back. We can thus fill in the indentation to get a set $T'$ with the same difference between convex and concave corners as $T$ but with a smaller outer border. The claim thus follows from the induction hypothesis.

If there is an indentation that leaves the border of $R$ on one side and comes back on the next side of $R$ (see Figure \ref{fig:Lemma}, indentation $I'$), then a similar argument shows that the border of $T$ between the two corners where it leaves the border of $R$ and comes back contains one more concave corner -- and filling in the indentation concludes the proof similar as before.

The only case remaining is that the border of $T$ contains no indentation at all, implying that $T$ and $R$ are identical and thus the claim trivially holds.

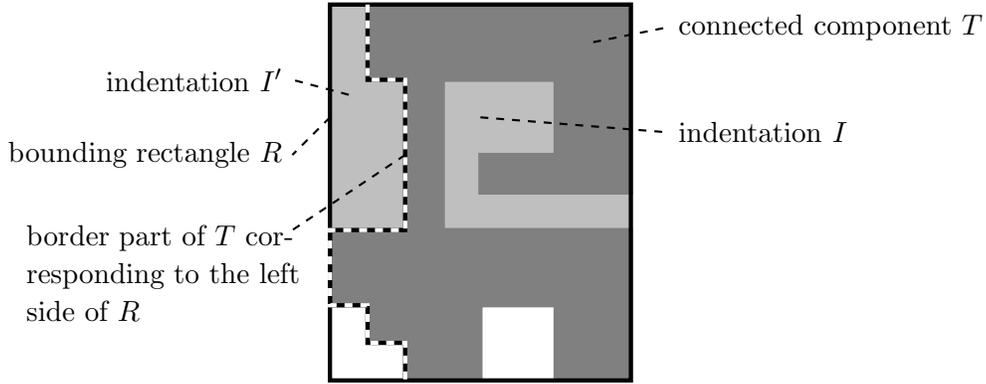
\begin{figure}
\begin{center}
\begin{tikzpicture}


\draw [dashed, thick](0,3.5) -- (-0.5,3);
\draw (-0.5,3) node[left]  {bounding rectangle $R$};

\draw [ultra thick, fill=lightgray, lightgray] (4,2) -- (1.5,2) -- (1.5,4) -- (3,4) -- 
					(3,3) -- (2, 3) --  (2,2.5) -- (4, 2.5) -- (4,2);

\draw [ultra thick, fill=lightgray, lightgray] (0.5,5) -- (0.5,4) -- (1,4) --  (1,2) -- (0, 2)--(0,5) -- (0.5,5);

\draw [fill=gray, ultra thick, gray] (1,0) -- (2,0) -- (2,1) -- (3,1) -- 
 	(3,0) -- (4,0) -- (4,1) -- (4,2) -- (1.5, 2) -- (1.5,4) -- (3,4)
 	 -- (3,3) -- (2, 3) --  (2,2.5) -- (4, 2.5)  -- (4,5) --
 	 (0.5, 5)-- (0.5,4) -- (1,4) --  (1,2) -- (0, 2) -- (0,1) --(0.5,1) -- (0.5,0.5) --(1,0.5) -- (1,0);

  \draw [ultra thick] (0,0) rectangle (4,5);

\draw [ultra thick]  (0.5, 5)-- (0.5,4) -- (1,4) --  (1,2) -- (0, 2) -- (0,1) --(0.5,1) -- (0.5,0.5) --(1,0.5) -- (1,0);
\draw [ultra thick, dashed, white]  (0.5, 5)-- (0.5,4) -- (1,4) --  (1,2) -- (0, 2) -- (0,1) --(0.5,1) -- (0.5,0.5) --(1,0.5) -- (1,0);

\draw [dashed, thick] (3.5,4.5) -- (4.5,4.7);
\draw (4.5,4.7) node[right]  {connected component $T$};

\draw  [dashed, thick](0.25,3.8) -- (-0.5, 4);
\draw (-0.5, 4) node[left]  {indentation $I'$};

\draw [dashed, thick] (2,3.5) -- (4.5,3.3);
\draw (4.5,3.3) node[right]  {indentation $I$};

\draw  [thick, dashed](1,3) -- (-0.5,2);
\draw (0.1,2.2) node[below left, text width=4cm]  {border part of $T$ corresponding to the left side of $R$};	 
 	 
\end{tikzpicture}
\end{center}
\label{fig:Lemma}
\caption{Indentations}
\end{figure}

\noindent $(ii)$: We call the border part of $T$ between the top-left and bottom-left corner the \emph{left boundary} and the rest of the border of $T$ the \emph{right boundary}. If the left boundary of $T$ is not just a straight line, by filling all the indentations along the right boundary, we get a set $T'$ with four more concave than convex corners (by $(i)$). But, since the right boundary of $T'$ coincides completely with the bounding rectangle $R$, it contains two concave and no convex corner. Thus, since the top-left and the bottom-left corner are concave corners as well, it holds that the left boundary contains the same number of concave and convex corners.  

Observe that this statement is also true for the border parts of $T$ corresponding to the right, upper and lower side of $R$.
\end{proof}

\section{Proof of Theorem \ref{thm:Jigsaw}, $q = o(n)$}

The non-uniqueness comes essentially from the birthday paradox, cf.~\cite{MosselPC}. We add the short proof for completeness. The main idea is that w.h.p. there exist two pieces in $V$ that have the exact same 4-tuple of colours assigned to their half-edges. These pieces can be interchanged resulting in at least two different valid reconstructions. 

Note that a random colouring function $g$ assigns the same colour to every two adjacent half-edges, thus we do not have the property that 4-tuples of colours assigned to pieces are chosen independently from $Q^4$. One can circumvent this by only considering every other piece. More precisely, let 
$$
	V' := \{v_{i,j} \in V \mid i+j \mbox{ is even}\}.
$$
Note that $|V'| = \lceil n^2 / 2 \rceil$ and that $V'$ is a set of pairwise non-adjacent pieces  (it consists of all `black' fields of an $n \times n$ chessboard). Since the pieces in $V'$ are pairwise non-adjacent, it is easy to see that the corresponding 4-tuples of colours are mutually independent. Since the colouring of the half-edges of the pieces in $V'$ is mutually independent, we can assume that the pieces receive their colour one after the other independently and uniformly at random. Conditioning on the event that the there are no two identical pieces among the first $k$ pieces, the probability that the $(k+1)$st piece is identical to one of the first $k$ pieces is $(1 - \frac{k}{q^4})$. Therefore, we have that 
$$\Pr[\mbox{no two identical pieces}] = \prod_{k=1}^{\frac{n^2}{2}-1} (1 - \frac{k}{q^4}) \le e^{-\sum_{k=1}^{\frac{n^2}{2}-1} \frac{k}{q^4}} =  e^{-\frac{n^4-2n^2}{8q^4}} = o(1),$$
where we used that $1-x \leq e^{-x}$ and $q =o(n) $. This means that with probability $1 - o(1)$ there are two pieces in $V'$ which can be swapped and thus the puzzle is w.h.p. not uniquely reconstructable. Note that this argument still applies if we do not allow rotations.


\section{Proof of Theorem \ref{thm:Jigsaw}, $q \ge n^{1 + \varepsilon}$}

A straightforward first moment argument shows that we do not expect two identical pieces. Indeed, let $X_{(i,j),(i',j')}$ be the indicator variable for the event that $v_{i,j}$ and $v_{i',j'}$ are identical. Then the probability that this variable is one is at most $4/q^4$ if the two pieces are not adjacent in $G$ (here the constant $4$ takes care of the rotations) and at most $4/q^3$ if   $v_{i,j}$ and $v_{i',j'}$ are adjacent. As we have less than $n^4$ such variables and only $O(n^2)$ adjacent pairs, the expected number of identical pieces is $o(1)$. From Markov's inequality it thus follows that w.h.p. we do not have any pair of identical pieces. 

Clearly, this argument shows that we cannot obtain a valid reconstruction by swapping only two pieces. 
Before we show that a similar statement holds for `bigger components', we need some more definitions. Two half-edges which are adjacent in $G$ are called \emph{partners}. Let $H = G(\phi, \Pi)$ be a reconstruction. An edge $e \in E_H$ is called an \emph{original edge} if its two half-edges are partners, i.e. if $e \in E_G$ as well. Else, we call it a \emph{new edge}. Furthermore, let $T \subset V$ be a connected set in $H$. We call $T$ \emph{stable} if and only if all edges in $E_H(T)$ are original edges and all edges in $E_H(T, V \setminus T)$ are new edges. On the one hand this implies that $T$ is a connected set in $G$ as well (but not necessarily at the same place, and it might be rotated) and, on the other hand, that all pieces along the border of $T$ in the reconstruction $H$ are either distinct from the corresponding piece in $G$ or are at least rotated.


We show that certain `undesired' structures, which can be thought of as a part of the border of a stable set $T$, w.h.p. cannot be a part of a valid reconstruction. 
By defining these structures appropriately, this will in turn allow us to conclude that the puzzle has a unique reconstruction. For example, assume that $T$ is a $1\times 4$ subgrid of $G$ and assume that we attach to one of its longer sides another $1\times 4$ subgrid, different than the one which appears in $G$. Then the probability that all four edges are monochromatic is $1/q^4$.  Thus, since there are at most $O(n^2)$ subgrids of constant size in $G$, the expected number of such valid configurations is bounded by $O(n^2\cdot n^2 / q^4)=o(1)$. By Markov's inequality, such valid configurations do not appear.  



Note that, in general, it is not true that we get a factor of $1/q$ for each new edge. Consider, for example, two edges $(a,a')$ and $(b,b')$ that are rearranged to edges $(a,b')$ and $(b,a')$. Clearly, the probability that they are both monochromatic is just $1/q$ and not $1/q^2$. Before we describe the undesired configurations, we first give two propositions to handle such situations.
The first proposition handles the case when we can order the given new edges in such a way that we can bypass the dependencies of the colouring of the half-edges, whereas the second proposition handles the general case.

The idea of the proofs of these propositions is the so-called principle of deferred decision. That is, we do not reveal the colouring of the original edges all at once, but instead we reveal them whenever we need to know whether a new edge is monochromatic. Clearly, this implies that whenever we have so far revealed at most one of the half edges of a new edge, the probability that the edge is monochromatic is exactly $1/q$. Of course, 
revealing the colour of the other half-edge means that we also reveal the colour of the partner of that half-edge. Therefore, the order in which we reveal colours plays an essential part in obtaining good probability bounds.

\begin{proposition}\label{prop:GoodEdges} 
Let  $U = \{e_1, \ldots, e_m \}$ be a set of new edges and, for all $i \in [m]$, let $h_i^1$ and $h_i^2$ be the two half-edges of $e_i$. Assume that the colours of all half-edges are still unknown and that at most one of the partners of $h_i^1$ and $h_i^2$ is an element of $\bigcup_{k=1}^{i-1} e_k$, for every $i \in [m]$. Then
$$\Pr[\mbox{all edges in } U \mbox{ are monochromatic}] = \frac{1}{q^{m}}.$$
\end{proposition}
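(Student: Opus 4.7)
The plan is to implement the principle of deferred decisions mentioned before the statement: we reveal random colours of edges of $G$ only when forced to, process the $e_i$ in the given order, and show that each $e_i$ contributes an independent factor of exactly $1/q$ to the probability of being monochromatic.

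Formally, I would maintain the set $E_{\mathrm{rev}}(i) \subseteq E_G$ of edges of $G$ whose colour has been revealed before handling $e_i$, with $E_{\mathrm{rev}}(1) = \emptyset$. To process $e_i$, first reveal (if not already revealed) the colour of the edge $\{h_i^1, p_i^1\} \in E_G$ containing $h_i^1$ and then that of $\{h_i^2, p_i^2\}$; together these determine the colours of $h_i^1, h_i^2$ and hence whether $e_i$ is monochromatic. Since $U$ consists of edges of a reconstruction, the half-edges $h_i^s$ are pairwise distinct across all pairs $(i, s)$.

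The crux is the claim that for every $i$, at least one of $\{h_i^1, p_i^1\}$ and $\{h_i^2, p_i^2\}$ lies outside $E_{\mathrm{rev}}(i)$. Using the hypothesis, pick $s \in \{1, 2\}$ with $p_i^s \notin \bigcup_{k < i} e_k$. If $\{h_i^s, p_i^s\}$ were in $E_{\mathrm{rev}}(i)$ it would equal $\{h_k^t, p_k^t\}$ for some $k < i$ and $t \in \{1, 2\}$, so $p_i^s \in \{h_k^t, p_k^t\}$. The option $p_i^s = h_k^t$ contradicts the choice of $s$; the option $p_i^s = p_k^t$ forces $h_i^s = h_k^t$ by uniqueness of partners, contradicting the distinctness of the half-edges in $U$.

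Conditioning on the full history of revealed colours after processing $e_1, \ldots, e_{i-1}$, the claim gives a choice of $s$ for which $\{h_i^s, p_i^s\}$ is an unqueried edge of $G$; its colour is therefore uniform on $Q$ and independent of the history. The colour of the other half-edge of $e_i$ is some specific value $c$, either already determined by the history or obtained by first freshly revealing $\{h_i^{3-s}, p_i^{3-s}\}$ with a uniform independent colour. Either way the conditional probability that $e_i$ is monochromatic equals exactly $1/q$. Multiplying these conditional probabilities over $i = 1, \ldots, m$ yields the claimed $1/q^m$. The only genuinely nontrivial ingredient is the case analysis in the claim; once that is in place, the computation is routine.
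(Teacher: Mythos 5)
Your proof implements the same deferred-decisions argument the paper uses: process $e_1,\ldots,e_m$ in the given order, reveal the two $G$-edges covering each $e_i$, and show each $e_i$ contributes an independent conditional factor of exactly $1/q$. The only difference is that you make explicit the case analysis behind the paper's one-line assertion that ``the colour of at least one of its half-edges is still unknown,'' which is a correct and useful clarification.
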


\begin{proof}
We consider the edges~$e_1,\ldots,e_m$ one after the other, in this order. By the assumption
that at most one of the partners of $h_i^1$ and $h_i^2$ is an element of $\bigcup_{k=1}^{i-1} e_k$, when we consider $e_i$, the colour of at least one of it's half-edges is still unknown. Thus, the probability that $e_i$ is monochromatic is $1/q$. As all these events are independent we get that the probability that all the  edges of $U$ are monochromatic is equal to $1/q^m$.
\end{proof}

In the general case, we will not be able to order the edges so that we can bypass the dependencies of the colouring. Moreover, as we will later apply Propositions~\ref{prop:GoodEdges} and~\ref{prop:NewEdgePairs} repeatedly (and to different sets of edges),
we will not be able to always assume that the colours of all half-edges are still unknown. In Proposition~\ref{prop:NewEdgePairs} we thus allow that the colour of some half-edges are already known.

\begin{proposition}\label{prop:NewEdgePairs} 
Let  $U = \{e_1, \ldots, e_m \}$ be a set of new edges and, for all $i \in [m]$, let $h_i^1$ and $h_i^2$ be the two half-edges of $e_i$. If for every $i \in [m]$ at most one of the colours of $h_i^1$ and $h_i^2$ is already known, then 
$$\Pr[\mbox{all edges in } U \mbox{ are monochromatic}] \leq \frac{1}{q^{\lceil m/2 \rceil}}.$$
\end{proposition}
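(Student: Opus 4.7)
The plan is to encode the dependency structure among the new edges as an auxiliary multigraph that turns out to have maximum degree~$2$, and then bound the monochromaticity probability component by component.

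More precisely, I would build a multigraph $\mathcal{M}$ whose vertices are the original edges of $G$ that contain at least one of the half-edges $h_1^1,h_1^2,\ldots,h_m^1,h_m^2$, and whose edges are the elements of $U$: for each $e_i \in U$, join the two vertices containing $h_i^1$ and $h_i^2$. Since every half-edge of $G$ lies in at most one new edge in $U$ (as each half-edge belongs to exactly one edge of the reconstruction $H$) and every original edge contains at most two half-edges, every vertex of $\mathcal{M}$ has degree at most~$2$. Hence $\mathcal{M}$ is a disjoint union of paths and cycles.

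Next, I would exploit the fact that the two partner half-edges of an original edge always share the same colour, so each vertex of $\mathcal{M}$ carries a single colour, either pre-specified (``known'') or an unrevealed uniformly random element of $Q$, independently across vertices. A new edge $e_i$ is monochromatic exactly when its two endpoints in $\mathcal{M}$ carry equal colours, so the event ``every edge in $U$ is monochromatic'' coincides with ``every connected component of $\mathcal{M}$ is monochromatic''. The hypothesis of the proposition also translates cleanly: the known-colour vertices form an independent set in $\mathcal{M}$.

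The final step is a short case analysis on each component $C$. Writing $v_C$ and $k_C$ for its numbers of vertices and edges and $r_C$ for the number of its known-colour vertices, the probability that $C$ is monochromatic is either $0$ (if the pre-specified colours inside $C$ already disagree) or exactly $q^{-(v_C-\max(1,r_C))}$. The independent-set constraint forces $r_C\le\lceil v_C/2\rceil$ on paths and $r_C\le\lfloor v_C/2\rfloor$ on cycles, and a short check of the sub-cases shows that the exponent of $1/q$ is always at least $\lceil k_C/2\rceil$. Multiplying over components and using $\sum_C \lceil k_C/2\rceil \ge \lceil (\sum_C k_C)/2\rceil = \lceil m/2\rceil$ then gives the claimed bound. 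The one place where the inequality is tight, and hence the main technical point, is the cycle case with no known vertices: a cycle on $v$ vertices forces only a factor of $q^{-(v-1)}$ instead of $q^{-v}$, which is exactly why the bound cannot be strengthened to $q^{-m}$.
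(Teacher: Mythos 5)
Your proof is correct, but it takes a genuinely different route from the paper's. The paper gives a sequential argument via the principle of deferred decision: it classifies each edge of $U$ as \emph{open}, \emph{critical}, or \emph{closed}, examines edges one at a time while always preferring a critical edge when one is available, and observes that each examined edge contributes a factor of $1/q$ while at most one additional edge can become closed in the same step; this allows $\lceil m/2\rceil$ productive steps. Your argument is instead static and structural: you encode the dependency pattern into the multigraph $\mathcal{M}$ on the relevant original edges, use the fact that each original edge has at most two half-edges and each half-edge lies in at most one edge of $U$ to get maximum degree two, and then compute the monochromaticity probability exactly per path or cycle component, feeding in the independent-set constraint that the hypothesis imposes on the known-colour vertices. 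Both are sound and give the same bound. What your version buys is transparency about tightness: the exponent $\lceil m/2\rceil$ is achieved exactly at short cycles with no known vertex --- in particular the two-vertex cycle (a pair of parallel edges of $\mathcal{M}$) reproduces the paper's motivating example of two original edges $(a,a')$ and $(b,b')$ rearranged into $(a,b')$ and $(b,a')$, and shows why the exponent cannot be improved to $m$. What the paper's greedy proof buys is brevity and freedom from the component-type case split. One minor point you should state explicitly: the multiplication over components relies on independence, which holds because distinct components of $\mathcal{M}$ involve disjoint sets of original edges, and the unrevealed colours of distinct original edges are sampled independently.
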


\begin{proof}
We consider the edges in $U$ one by one, but not necessarily in the order $e_1, \ldots, e_m$.
We call an edge \emph{open} if the colour of both its half-edges is still unknown, \emph{critical} if the colour of exactly one of its half-edges was already revealed, and {\em closed} otherwise. By  assumption, all edges in $U$ are either
open or critical in the beginning. We argue that we can consider $\lceil m/2 \rceil$ edges of $U$ in such an order that in each step we consider an open or critical edge (and thus get a factor of $1/q$) and only the considered edge and at most one additional edge change their status to closed. Clearly, this will imply the desired bound.

If there is at least one critical edge $e_i$ in $U$, we consider it and reveal the colour of its so-far uncoloured half-edge. The probability that $e_i$ is monochromatic is thus equal to $1/q$. By revealing the colour of the half-edge we also reveal the colour of its partner, which may imply that one (but only one!) more edge in $U$ changes its status to closed. 

Otherwise, if $U$ contains only open or closed edges and at least one open edge, then revealing the colours of both half-edges of an open edge again implies that the probability that this edge is monochromatic is $1/q$ and, again, either at most one more edge can change its status to closed or at most two can change the status to critical.
\end{proof}

The proof of Theorem \ref{thm:Jigsaw} relies on the concept of \emph{configurations}. We start by describing the general setup. Let $C_1, \ldots, C_m$ be disjoint sets which are connected in $G$ such that $C_i \cup C_j$ is not a connected set (for distinct $i, j \in [m]$) and $|\bigcup_{i=1}^{m} C_i | \leq K(\varepsilon)^2$, where $K(\varepsilon)$ is a constant which we define later. In the cases we consider, we always reassemble the pieces into a connected set $S \cup U = \bigcup_{i=1}^{m} C_i$ such that
\begin{itemize}
	\item $S \subseteq C_1$ is a connected set in $G$, and 
	\item all edges between $S$ and $U$ are new edges. 
\end{itemize}
We call an $(m+1)$-tuple $(C_1, \ldots, C_m, \mathtt{rule})$ a \emph{configuration}, where  $\mathtt{rule}$ tells us how to rearrange and rotate the pieces of $C_1, \ldots, C_m$. Thus $\mathtt{rule}$ consists of two functions $\phi'$ and $\Pi'$, where $\phi'  \colon \bigcup_{i=1}^{m} C_i \rightarrow K(\varepsilon) \times K(\varepsilon)$ is an injective function specifying the rearrangement of the pieces and $\Pi'$ assigns any piece of $\bigcup_{i=1}^{m} C_i$ a cyclic shift. 

In the following, we will consider five \emph{types} of configurations, where the type of the configuration indicates the `shape' of the sets $S$ and $U$. Figure~\ref{fig:configurations} illustrates the first four types. In this figure the set $S$ is composed of black pieces and the set $U$ is composed of certain pieces next to $S$ (we will make this precise for each type below).
Intuitively, one can think of $S$ as being a part of the border of a stable set and $U$ as a set of pieces along this border that all define new edges with $S$. Note that for some types, namely $\mathtt{straightline}$, $\mathtt{convexcorners}$ and $\mathtt{subsquare}$, we additionally impose a lower bound on the size of $S$ and/or $U$. Moreover, $\mathtt{subsquare}$ imposes some further technical conditions on induced stable sets, which will be discussed later. We call a configuration $(C_1, \ldots, C_m, \mathtt{rule})$ \emph{valid} if all edges inside $S \cup U$ are monochromatic. We denote by $X_{(C_1, \ldots, C_m, \mathtt{rule})}$ the indicator random variable for this event.

Our aim is to show that the expected number of valid configurations $(C_1, \ldots, C_m, \mathtt{rule})$ of all five types in a random jigsaw puzzle is $o(1)$, since then Markov's inequality implies that w.h.p. none of these configurations can be found in a valid reconstruction. As we assumed that $|S\cup U| =  |\bigcup_{i=1}^{m} C_i| = O(1)$, the number of tuples $(C_1,\ldots,C_m)$ is bounded by $O(n^{2m})$ and for each such tuple (and each type) we have only $O(1)$ different ways to reassemble the pieces into $S \cup U$.  
By linearity of expectation it thus suffices to show that  $\Pr[X_{(C_1, \ldots, C_m, \mathtt{rule})}=1] = o(n^{-2m})$. 
This is what we will do now, details depending on the type of the configuration that we consider. Since the problem is monotone in the number of colours (if we recolour all edges of a fixed colour uniformly at random with the remaining colours, then this increases the likelihood of unique reconstructability), we can assume that $\eps < 1/4$.


\begin{figure}[!ht]
\begin{center}
\begin{tikzpicture}


 \draw [fill=gray, gray]  (-4, 2.05) rectangle (-1, 1.95);

\draw [dashed](-1, 2) -- (0, 2.5); 
\draw (0, 2.5) node[above]  {set of new edges};

\foreach \x in {-4,-3.5,-3,-2.5,-2,-1.5} {
\draw   (\x+0.05,1.55) rectangle (\x + 0.45, 1.95);
 }

\foreach \x in {-4,-3.5,-3,-2.5,-2,-1.5} {
\draw  [fill] (\x+0.05,2.45) rectangle (\x + 0.45, 2.05);
 }

\draw [dashed](-2.25, 2.25) -- (-2.5, 3); 
\draw (-2.5, 3) node[above]  {set $S$};

\draw [dashed](-2.25, 1.5) -- (-2.5, 1); 
\draw (-2.5, 1) node[below]  {set $U$};


 \draw [fill=gray, gray] (6,0.95) -- (5.45, 0.95) -- (5.45,1.45) -- (4.95,1.45)
 	--	(4.95,1.95) -- (3.95,1.95) -- (3.95,2.55) -- (5.45, 2.55) -- (5.45,3.05) -- (6,3.05)
 	-- (6,2.95) -- (5.55,2.95) -- (5.55, 2.45) -- (4.05, 2.45) 
 	-- (4.05, 2.05) -- (5.05, 2.05) -- (5.05, 1.55) -- (5.55, 1.55)
 	-- (5.55, 1.05) -- (6,1.05) -- (6,0.95) ;

\foreach \x in {5,5.5} {
\draw  [fill] (\x+0.05, 3.05) rectangle (\x + 0.45, 3.45);
 }
\foreach \x in {5, 4.5, 4, 3.5} {
\draw  [fill] (\x+0.05, 2.55) rectangle (\x + 0.45, 2.95);
 }
\foreach \x in {3.5} {
\draw  [fill] (\x+0.05, 2.45) rectangle (\x + 0.45, 2.05);
 }
\foreach \x in {4.5, 4, 3.5} {
\draw  [fill] (\x+0.05, 1.55) rectangle (\x + 0.45, 1.95);
 }
\foreach \x in {5, 4.5} {
\draw  [fill] (\x+0.05, 1.05) rectangle (\x + 0.45, 1.45);
 }
 \foreach \x in {5, 5.5} {
\draw  [fill] (\x+0.05, 0.55) rectangle (\x + 0.45, 0.95);
 }

\draw   (5.55,2.55) rectangle (5.95, 2.95);
\draw  (4.05,2.05) rectangle (4.45, 2.45);
\draw  (5.05,1.55) rectangle (5.45, 1.95);
\draw   (5.55,1.05) rectangle (5.95, 1.45);

\draw [dashed](6,1.5) -- (6.5, 2); 
\draw [dashed](5.5,1.75) -- (6.5, 2); 
\draw [dashed](4.5,2.25) -- (6.5, 2); 
\draw [dashed](6,2.5) -- (6.5, 2); 
\draw (6.5, 2) node[right]  {set $U$};

\draw [dashed](5.5,1) -- (4.3, 1); 
\draw (4.3,1) node[left]  {set of new edges};

\draw [dashed](3.75, 3) -- (3.75, 3.35); 
\draw (3.75,3.35) node[above]  {set $S$};


 \draw [fill=gray, gray] (-2.05,-4.45) -- (-2.45, -4.45) -- (-2.45,-3.95) -- (-3.45,-3.95)
 	--	(-3.45,-3.05) -- (-2.45,-3.05) -- (-2.45,-2.55) -- (-1.55, -2.55) -- (-1.55,-2.95) -- (-2.05,-2.95)
 	-- (-2.05,-4.55) -- (-1.95,-4.55) -- (-1.95, -3.05) -- (-1.45, -3.05) -- (-1.45, -2.45) 
 	-- (-2.55, -2.45) -- (-2.55, -2.95) -- (-3.55, -2.95) -- (-3.55, -4.05)
 	-- (-2.55, -4.05) -- (-2.55,-4.55) -- (-2.05,-4.55) -- (-2.05,-4.45) ;

\foreach \x in {-3,-2.5, -2, -1.5} {
\draw  [fill] (\x+0.05, -2.05) rectangle (\x + 0.45, -2.45);
 }
\foreach \x in {-1.5,-4, -3.5, -3} {
\draw  [fill] (\x+0.05, -2.55) rectangle (\x + 0.45, -2.95);
 }
\foreach \x in {-1.5,-2, -4} { 
\draw  [fill] (\x + 0.05, -3.05) rectangle (\x + 0.45, -3.45);
}
\foreach \x in {-4, -2} {
\draw  [fill] (\x+0.05, -3.55) rectangle (\x+0.45, -3.95);
}
\foreach \x in {-4, -3.5, -3, -2} {
\draw  [fill] (\x+0.05, -4.05) rectangle (\x + 0.45, -4.45);
 }
 \foreach \x in {-3,-2.5, -2} {
\draw  [fill] (\x+0.05, -4.55) rectangle (\x + 0.45, -4.95);
 }

\foreach \x in {-2, -2.5} {
\draw (\x+0.05, -2.55) rectangle (\x + 0.45, -2.95);
 }
\foreach \x in {-3.5,-3, -2.5} { 
\draw   (\x + 0.05, -3.05) rectangle (\x + 0.45, -3.45);
}
\foreach \x in {-3.5,-3, -2.5} {
\draw  (\x+0.05, -3.55) rectangle (\x+0.45, -3.95);
}
\foreach \x in {-2.5} {
\draw   (\x+0.05, -4.05) rectangle (\x + 0.45, -4.45);
 }
 
  \draw [white](-2.75, -2.75) -- (-3.5, -2); 
 \draw [dashed](-2.75, -2.75) -- (-3.5, -2); 
\draw (-3.5,-2) node[above]  {set $S$};

\draw [white](-2.5,-3.5) -- (-0.5, -3.5); 
\draw [dashed](-2.5,-3.5) -- (-0.5, -3.5); 
\draw (-0.5, -3.5) node[right]  {set $U$}; 

\draw [white] (-3,-4) -- (-4, -5); 
\draw [dashed] (-3,-4) -- (-4, -5); 
\draw (-4,-5) node[below]  {set of new edges};


 \draw [fill=gray, gray] (6,-4.55) -- (5.45, -4.55) -- (5.45,-4.05) -- (4.95,-4.05)
 	--	(4.95,-3.55) -- (3.95,-3.55) -- (3.95,-2.95) -- (5.45, -2.95) -- (5.45,-2.45) -- (6,-2.45)
 	-- (6,-2.55) -- (5.55,-2.55) -- (5.55, -3.05) -- (4.05, -3.05) 
 	-- (4.05, -3.45) -- (5.05, -3.45) -- (5.05, -3.95) -- (5.55, -3.95)
 	-- (5.55, -4.45) -- (6, -4.45) -- (6, -4.55) ;
 	
 \draw [fill=darkgray, darkgray] (5.95,-2) rectangle (6.05, -5) ;
 \draw  [white](6,-3) -- (6.5, -2.5); 
  \draw  [dashed](6,-3) -- (6.5, -2.5);  
\draw (6.5, -2.5) node[right]  {border of the grid}; 
 
\foreach \x in {5,5.5} {
\draw  [fill] (\x+0.05, -2.05) rectangle (\x + 0.45, -2.45);
 }
\foreach \x in {5, 4.5, 4, 3.5} {
\draw  [fill] (\x+0.05, -2.95) rectangle (\x + 0.45, -2.55);
 }
\foreach \x in {3.5} {
\draw  [fill] (\x+0.05, -3.05) rectangle (\x + 0.45, -3.45);
 }
\foreach \x in {4.5, 4, 3.5} {
\draw  [fill] (\x+0.05, -3.95) rectangle (\x + 0.45, -3.55);
 }
\foreach \x in {5, 4.5} {
\draw  [fill] (\x+0.05, -4.45) rectangle (\x + 0.45, -4.05);
 }
 \foreach \x in {5, 5.5} {
\draw  [fill] (\x+0.05, -4.95) rectangle (\x + 0.45, -4.55);
 }

\foreach \x in {5.5} {
\draw  (\x+0.05, -2.95) rectangle (\x + 0.45, -2.55);
 }
\foreach \x in {4,4.5,5,5.5} {
\draw  (\x+0.05, -3.05) rectangle (\x + 0.45, -3.45);
 }
\foreach \x in {5,5.5} {
\draw   (\x+0.05, -3.95) rectangle (\x + 0.45, -3.55);
 }
\foreach \x in {5.5} {
\draw  (\x+0.05, -4.45) rectangle (\x + 0.45, -4.05);
 }

\draw [white](5.25, -2.75) -- (4, -2); 
\draw [dashed](5.25, -2.75) -- (4, -2); 
\draw (4,-2) node[above]  {set $S$};

\draw [white] (5.5,-3.5) -- (6.5, -3.5);  
\draw [dashed] (5.5,-3.5) -- (6.5, -3.5);  
\draw (6.5, -3.5) node[right]  {set $U$}; 

\draw  [white](4.5,-3.5) -- (3.5, -4.5); 
\draw  [dashed](4.5,-3.5) -- (3.5, -4.5); 
\draw (3.5,-4.5) node[below]  {set of new edges};

\draw (-2.5,0) node[below, text width=4cm]  {$(C_1, \ldots, C_m,  \mathtt{rule})$ of the type $ \mathtt{straightline}$}; 

\draw (4.5,0) node[below, text width=4.3cm]  {$(C_1, \ldots, C_m,  \mathtt{rule})$ of the type $ \mathtt{convexcorners}$}; 

\draw (-2.5, -6) node[below, text width=4cm]  { $(C_1, \ldots, C_m,  \mathtt{rule})$ of the type $ \mathtt{hole}$}; 

\draw (4.5, -6) node[below, text width=4.3cm]  { $(C_1, \ldots, C_m,  \mathtt{rule})$ of the type $ \mathtt{indentation}$}; 

\end{tikzpicture}
\end{center}
\caption{Illustration of different types of configurations $(C_1, \ldots, C_m,  \mathtt{rule})$; the black squares depict the pieces in $S$, while the white squares depict the pieces in $U$}
\label{fig:configurations}
\end{figure}
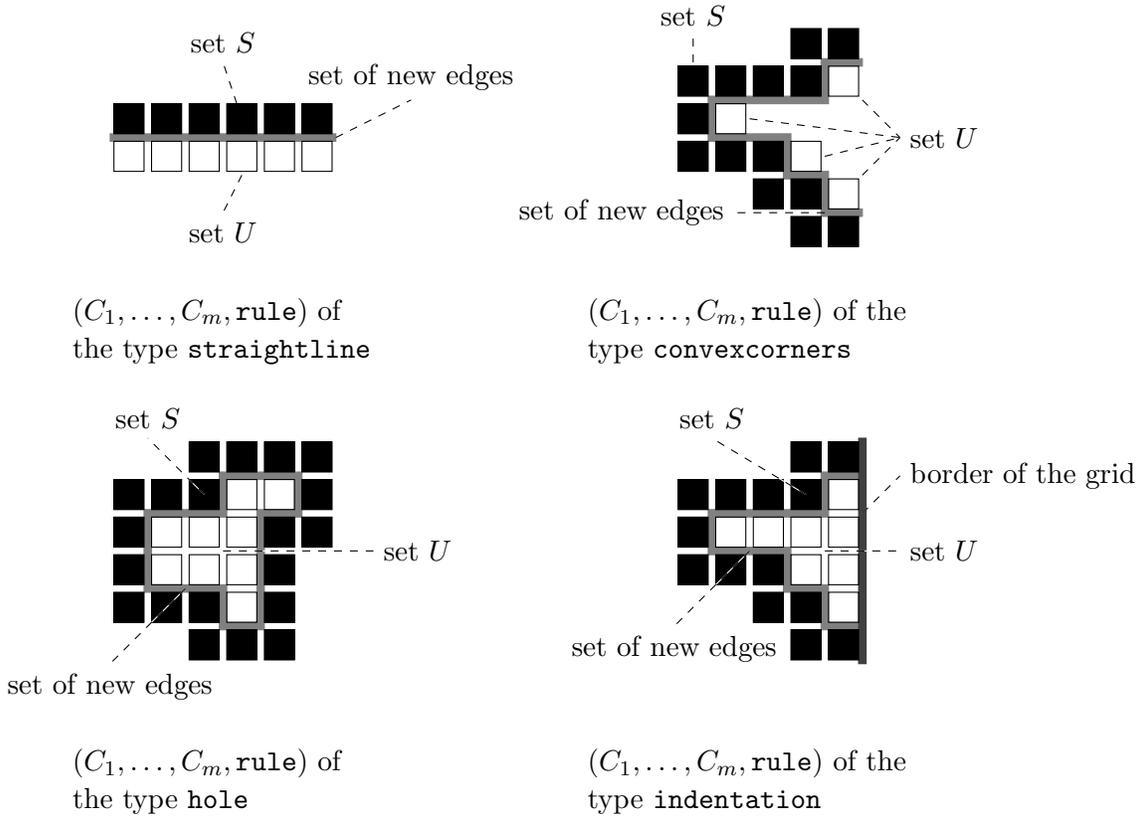

\noindent \textbf{$\mathtt{straightline}$:}
A configuration $(C_1, \ldots, C_m,  \mathtt{rule})$ of the type $\mathtt{straightline}$ is defined as illustrated in the top left part of Figure \ref{fig:configurations}. The key characteristics are:
\begin{itemize}
	\item $S$ is a connected \emph{row} of length $\ell(\varepsilon) := \lceil 3(1 +1 / \eps) \rceil$, and 
	\item $U$ is the row along the lower side of $S$. 
\end{itemize}
We want to bound the probability that all edges within $S\cup U$ are monochromatic.  To do this we apply Propositions~\ref{prop:GoodEdges} and \ref{prop:NewEdgePairs}  to certain sets of (new) edges within $S\cup U$. Recall that, by assumption, all edges between $S$ and $U$ are new.

We first consider the leftmost piece in $U$ from every component $C_i$, $2\le i\le m$. All but at most one of these $m-1$ pieces have a left neighbour in $U$. Let $W \subseteq U$ denote the set of  $m-2$ of such pieces and let ${\cal L}$ denote the set of edges between the pieces in $W$ and their left neighbour in $U$ together with the edges of these pieces to the set $S$ (i.e. we look at the edges going left and up from each piece in $W$). Then $|{\cal L}| = 2(m-2)$ and  one easily checks that Proposition~\ref{prop:GoodEdges}  implies that the probability that all these edges are valid is $ q^{-|{\cal L}|}$.

Next consider the set ${\cal L}'$ of edges between the pieces of $U \setminus W$ and $S$. Note that $|{\cal L}'| = |U| - (m-2)$. Since $W$ does not contain any piece from $C_1$, by colouring all edges in ${\cal L}$ (and thus the partners of the corresponding half-edges) we have that every edge in ${\cal L}'$ still has at least one half-edge for which its colour has not yet been revealed. Indeed, the only way the lower half-edge of a piece in $S$ adjacent to some piece in $U \setminus W$ could have already received a colour is if its partner half-edge belongs to some edge in $\cal L$. However, the piece containing such a half-edge then belongs to $C_1 \setminus S$ and, as $W$ does not contain any piece from $C_1$, we have not yet revealed half-edges of such pieces. Proposition~\ref{prop:NewEdgePairs} thus implies that the probability that all these edges are valid is at most $q^{-|{\cal L}'|/2}$. Therefore, as $q = n^{1+\eps}$ and $|U| = |S| =  \lceil 3(1 +1 / \eps) \rceil$ we obtain
\begin{align*}
\Pr[X_{(C_1, \ldots, C_m, \mathtt{rule})}=1] &\leq q^{-|{\cal L}|} \cdot q^{-|{\cal L}'|/2} \\
 &= q^{-2(m-2) - (|U|- (m-2))/2} \\
&= q^{3 -2m - (|U| - m)/2} \\
&\leq \begin{cases}
	 q^{3 -  2m - |U| /4 } , &\text{if }  m \leq  |U| /2\\
	 q^{3  - 2m}, &\text{if } m >   |U| /2\\
\end{cases} \\
&\leq \begin{cases}
	 n^{- 2m + 3(1+\eps) -    3(1 + \eps) (1+1/ \eps) /4 } , &\text{if }  m \leq |U| /2\\
	 n^{-2m + 3(1+\eps)  - 2\eps m}, &\text{if } m >  |U| /2\\
\end{cases} \\
&= o(n^{-2m}),
\end{align*}
where the last step follows from the assumption that $\eps < 1/4$ and since $2\eps m >3(1+\eps)$ if  $m > |U| / 2 =  \lceil 3(1 +1 / \eps) \rceil/2$ .
\medskip

\noindent \textbf{$\mathtt{convex corners}$:}
A configuration $(C_1, \ldots, C_m,  \mathtt{rule})$ of the type $ \mathtt{convexcorners}$ is defined  as illustrated in the top right part of Figure \ref{fig:configurations}. The key characteristics are:
\begin{itemize}
	\item $S$ contains $2 \ell(\varepsilon)$ convex corners facing to the right, 
	\item $U$ is the set of pieces which are placed in these corners, and 
	\item $|S| \leq s(\eps) := 4 \ell(\varepsilon)^2$.
\end{itemize}
Since any piece can belong to at most two convex corners, we have $2 \ell(\varepsilon) \geq |U| \geq \ell(\varepsilon)$. We bound the probability that all edges within $S \cup U$ are monochromatic. 

Let us first consider one piece in $U$ from every component $C_i$, $2\le i\le m$ and let $W \subseteq U$ be the set of such pieces. 
Since all these pieces are placed in a convex corner, each will have at least two edges going to the set $S$.
Let ${\cal L}$ denote the set of edges between the pieces in $W$ and $S$. Then  $|{\cal L}| \geq 2(m-1)$ and Proposition~\ref{prop:GoodEdges} implies that the probability that all these edges are valid is at most $q^{-|{\cal L}|}$.

Next consider the set ${\cal L}'$ of edges between the pieces of $U \setminus W$ and $S$.
Note that we have $|{\cal L}'| \geq 2 \left(|U|- (m-1) \right)$. Similarly as in the $\mathtt{straightline}$ case, since $W$ does not contain any piece from $C_1$, by fixing the colouring for all edges in ${\cal L}$ we have that every edge in ${\cal L}'$ still has one half-edge for which the colour has not yet been revealed. Proposition~\ref{prop:NewEdgePairs} thus implies that the probability that all these edges are valid is at most $q^{-|{\cal L}'|/2} = q^{-\left(|U|- (m-1) \right)}$. Therefore, as $q = n^{1+\eps}$ and $|U| /2 \geq \lceil 3(1 +1 / \eps) \rceil / 2 $,  by similar calculations as in the previous case we obtain
\begin{align*}
\Pr[X_{(C_1, \ldots, C_m, \mathtt{rule})}=1] &\leq q^{-|{\cal L}|} \cdot q^{-|{\cal L}'|/2} \\
 &= q^{-2(m-2) - (|U|- (m-1))} \\
&= q^{3 -2m - (|U| - m)} \\
&\leq \begin{cases}
	 q^{3 -  2m - |U| /2 } , &\text{if }  m \leq  |U| /2\\
	 q^{3  - 2m}, &\text{if } m >   |U| /2\\
\end{cases} \\
&\leq \begin{cases}
	 n^{- 2m + 3(1+\eps) -    3(1 + \eps) (1+1/ \eps) /2} , &\text{if }  m \leq |U| /2\\
	 n^{-2m + 3(1+\eps)  - 2\eps m}, &\text{if } m >  |U| /2\\
\end{cases} \\
&= o(n^{-2m}),
\end{align*}
where the last step follows from the assumption that $\eps < 1/4 < 1/2$ and since $2\eps m > 3(1+\eps)$ if  $m > |U| / 2 \geq  \lceil 3(1 +1 / \eps) \rceil /2$.

Note that we did not use that $|S| \leq s(\eps)$ in the proof above. The choice of the constant $s$ will become clear later in the proof. 
\medskip

\noindent \textbf{$\mathtt{hole}$:}
A configuration $(C_1, \ldots, C_m,  \mathtt{rule})$ of the type $\mathtt{hole}$ is defined as illustrated in the bottom left part of Figure \ref{fig:configurations}. The key characteristic are:
\begin{itemize}
	\item  $S$ encloses $U$, and
	\item $|U| \leq s(\eps)^2$. 
\end{itemize}
We bound the probability that all edges within $S \cup U$ are monochromatic. 


We first consider the top-left piece in $U$ from every component $C_i$, $2\le i\le m$, i.e. the first piece from each $C_i$ we encounter while traversing $U$ row by row, from left to right. This implies that all these $m-1$ pieces have a new edge going up and  left. Let $W \subseteq U$ denote the set of theses pieces. 
Let ${\cal L}$ denote the set of edges between the pieces in $W$ and their left and upper neighbour (which are in $U$ or in $S$). Then $|{\cal L}| = 2(m-1)$  and Proposition~\ref{prop:GoodEdges} implies that the probability that all these edges are valid is $ q^{-|{\cal L}|}$.

Next, 
let ${\cal L}'$ be the set of all the edges between $U$ and $S$ facing down or right. 
Note that $|{\cal L}'|= b/2$, where $b$ is the number of edges between $S$ and $U$ (i.e. the border of $U$). By fixing the colouring for all edges in ${\cal L}$, every edge in ${\cal L}'$ still has one half-edge of which the colour has not yet been revealed (again, $W$ does not contain any piece from $C_1$). Proposition~\ref{prop:NewEdgePairs} thus implies that the probability that all these edges are valid is at most $q^{-|{\cal L}'|/2}$. Thus, if $q = n^{1+\eps}$ we have
$$\Pr[X_{(C_1, \ldots, C_m, \mathtt{rule})}=1]  =  q^{-|{\cal L}|} q^{-|{\cal L}'|/2} = q^{2 - 2m - b/4}.$$
If $b \ge 8$ then we immediately get the desired bound
$$\Pr[X_{(C_1, \ldots, C_m, \mathtt{rule})}=1] = o(n^{-2m}).$$
Else, since the border of a connected set $U$ is always of even size, we have that $b \le 6$. Hence $U$ contains at most 2 puzzle pieces and the desired probability can be obtained by a simple case analysis.
\medskip

\noindent \textbf{$\mathtt{indentation}$:}
A configuration $(C_1, \ldots, C_m,  \mathtt{rule})$ of the type $\mathtt{indentation}$ is defined as illustrated in the bottom right part of Figure \ref{fig:configurations}. The key characteristics are:
\begin{itemize}
	\item $U$ is enclosed by $S$ from either two or three sides (one can imagine that $S$ together with one or two adjacent sides of the border of the grid enclose $U$),  
	\item $|U| \leq s(\eps)^2$, and 
	\item each piece from $C_1, \ldots, C_m$ 
	is at distance at most $s(\eps)^2$ from the grid border in $G$. 
\end{itemize}
Due to the third restriction, the number of configurations $(C_1, \ldots, C_m, \mathtt{rule})$ of the type $\mathtt{indentation}$ is $O(n^{m})$, since we only have linearly many pieces available to `build' the connected sets $C_1, \ldots, C_m$. It thus suffices to show that $\Pr[X_{(C_1, \ldots, C_m, \mathtt{rule})}=1] = o(n^{-m})$. 


Similarly as in the previous case, we first consider the top-left piece in $U$ from every component $C_i$, $2\le i\le m$. Let $W$ denote the set of such pieces and 
let ${\cal L}$ denote the set of edges between the pieces in $W$ and their left and upper neighbour (which are in $U$ or in $S$). Then $|{\cal L}| = 2(m-1)$  and, as before, Proposition~\ref{prop:GoodEdges} implies that the probability that all these edges are valid is $ q^{-|{\cal L}|}$. Hence, since at least one new  edge is present in $S \cup U$,  
$$\Pr[X_{(C_1, \ldots, C_m, \mathtt{rule})}=1]  \leq \min \{ {1/q, q^{-2(m-1)}} \} = o(n^{-m}).$$
\medskip

\noindent \textbf{$\mathtt{subsquare}$:}
The key characteristics of the configuration $(C_1, \ldots, C_m,  \mathtt{rule})$ of the type $\mathtt{subsquare}$ are:
\begin{itemize}
	\item $S$ is an empty set,
	\item $U$ is a $K(\varepsilon) \times K(\varepsilon)$ square where $K(\varepsilon) := 4s(\varepsilon)^2 / \varepsilon$, and
	\item stable sets of $U$ are of size at most $s(\eps)^2$. 
\end{itemize}
In the last property we restrict the notion of a stable set only to the rearrangement $U$, i.e. we say that a connected subset $U' \subseteq U$ is stable if and only if all edges in $E_U(U')$ are original edges and all edges in $E_U(U', U \setminus U')$ are new edges. Additionally, we denote by $U = \bigcup_{i=1}^{z} T_i$ the decomposition of $U$ into these restricted stable sets.
Note that each $C_i$ is the union of some of the $T_j$'s and that each stable set $T_j$ belongs to exactly one $C_i$. 

To show that $\Pr[X_{(C_1, \ldots, C_m, \mathtt{rule})}=1] = o(n^{-2m})$, we consider the following two cases:

\noindent \textbf{Case 1:}
$m >  2K(\varepsilon)(1 + 1 / \eps)$:\\
Consider the top-left piece in $U$ from every component $C_i$, $1 \leq i \leq m$. 
Note that all such pieces which do not touch the upper or the left border of the square have two new edges, going up and going left. Let $W \subseteq U$ be the set of these pieces and note that $|W|\geq m -2K(\varepsilon)$. 
Let ${\cal L}$ denote the set of edges between the pieces in $W$ and their left and upper neighbour. Then $|{\cal L}| \geq 2(m -2K(\varepsilon))$  and Proposition~\ref{prop:GoodEdges} implies that the probability that all these edges are valid is $ q^{-|{\cal L}|}$.
Thus, we have that
$$
	\Pr[X_{(C_1, \ldots, C_m, \mathtt{rule})}=1]  \leq q^{-2(m-2K(\varepsilon))}.
$$
Note that $m >2K(\varepsilon)(1 + 1 / \eps)$ implies $2m < 2(1 + \eps)(m-2K(\varepsilon))$. Hence, since $q = n^{1+ \eps}$, it follows that
$$\Pr[X_{(C_1, \ldots, C_m, \mathtt{rule})}=1] = o(n^{-2m}).$$

\noindent \textbf{Case 2:} $m \leq 2K(\varepsilon)(1 + 1 / \eps)$:\\
In this case, instead of considering the sets $C_i$, we will consider the sets~$T_i$. Consider the top-left piece in $U$ from every stable set $T_i$, $1 \leq i \leq z$. 
Note that all such pieces which do not touch the upper or the left border of $U$ have a new edge going up and going left. So let $W \subseteq U$ be a set of these pieces and note that $|W|\geq z -2K(\varepsilon)$. 
Let ${\cal L}$ denote the set of edges between the pieces in $W$ and their left and upper neighbour. Then $|{\cal L}| \geq 2(z -2K(\varepsilon))$  and  Proposition~\ref{prop:NewEdgePairs} implies that the probability that all these edges are valid is $ q^{-|{\cal L}|/2}$.
Thus, we have that
$$
	\Pr[X_{(C_1, \ldots, C_m, \mathtt{rule})}=1]  \leq q^{-(z-2K(\varepsilon))}.
$$
Since $|T_i| \leq s(\eps)^2$ we have $z \geq K(\varepsilon)^2 / s(\varepsilon)^2$ and it is easy to check that the choice of $K$ implies $z-2K(\varepsilon) \geq  K(\varepsilon)^2/ s(\varepsilon)^2 - 2K(\varepsilon)  > 4K(\varepsilon)/ \eps$, with room to spare. Hence, since $m \leq 2K(\varepsilon)(1 + 1 / \eps)$, it holds that $(1 + \eps)(z-2K(\varepsilon))  > 2m$. Therefore, if $q = n^{1+\eps}$, we have that
$$\Pr[X_{(C_1, \ldots, C_m, \mathtt{rule})}=1]  \leq n^{-(1 + \eps)(z-2K)} = o(n^{-2m}).$$

Finally, note that $K(\varepsilon)$ is sufficiently large such that $|\bigcup_{i \in [m]} C_i| \le K(\varepsilon)^2$ for all five types of rules. With all these preliminaries at hand we are now ready to complete the proof of Theorem \ref{thm:Jigsaw}.

\begin{proof}[Proof of Theorem \ref{thm:Jigsaw}, $q \ge n^{1 + \varepsilon}$] Recall that a configuration is valid if all new edges defined by the given rule are monochromatic. We showed that the probability of a configuration being valid is $o(n^{-2m})$, respectively $o(n^{-m})$ for configurations of the type $\mathtt{indentation}$. Since there are $O(n^{m})$ possible configurations of the type $\mathtt{indentation}$ and $O(n^{2m})$ configurations of the other types, and because $m \le |S \cup U| \le K(\varepsilon)^2$, by Markov's inequality and a union-bound we obtain that w.h.p. none of the above mentioned configurations are valid. In other words, none of these configurations can appear in a valid reconstruction. Using this assumption, we show that a valid reconstruction is identical to $G$ (up to rotation of the whole grid). 

For the rest of the proof, let $H = G(\phi, \Pi)$ be a valid reconstruction of $G$. 
We first show that a valid reconstruction contains no stable set $T$ of size at least $s(\varepsilon)^2$.

Let us assume, towards a contradiction, that there exists a stable set $T$ of size at least $s(\varepsilon)^2$ and let $R$ be the bounding rectangle of $T$. Let us first consider the case where at least one side of $R$ does not coincide with the border of the grid $H$. Note that this implies that there has to exist a side of $R$ which does not coincide with the border of the grid {\em and} has length at least $s(\varepsilon)$. Consider the corresponding part of the border of $T$, as defined in Lemma~\ref{lemma:observations}, part $(ii)$. Then this part of the boundary of $T$ also has size at least $s(\varepsilon)$. Let us denote the pieces along this border part by $S$. Note that $S$ cannot have a straight line segment of size $\ell(\varepsilon)$ since this would form a valid $\mathtt{straightline}$ configuration. Since the number of convex and concave corners along $S$ is the same (Lemma \ref{lemma:observations}, $(ii)$) and for every $\ell(\varepsilon)$ consecutive pieces along $S$ we have at least one corner, we conclude that there exists a connected subset $S' \subseteq S$ of size at most $s(\varepsilon) = 4\ell(\varepsilon)^2$ which contains $2 \ell(\varepsilon)$ convex corners. However, this forms a valid $\mathtt{convexcorners}$ configuration, which is again a contradiction. 

Therefore, we conclude that if such a set $T$ exists, then the bounding rectangle $R$ has to touch all four borders of the grid $H$, implying that $R$ is identical to the grid $H$. Note that in this case $T$ touches all four borders of the grid $H$ as well (see Figure \ref{fig:GiantSet} below).

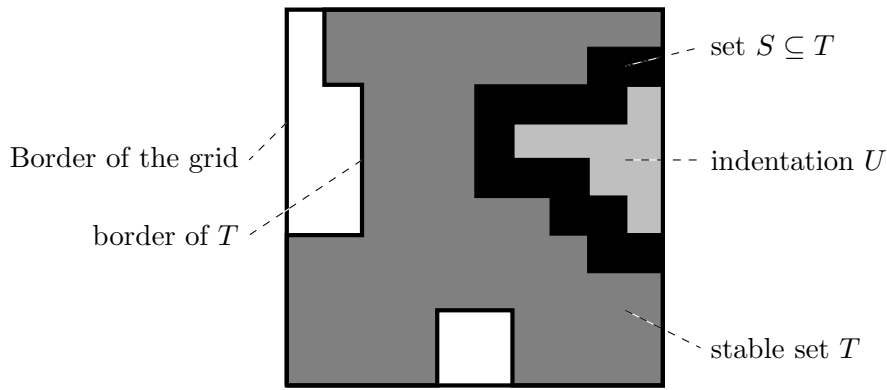
\begin{figure}[H]
\begin{center}
\begin{tikzpicture}

\path [fill=gray] (0,0) -- (2,0) -- (2,1) -- (3,1) -- 
 	(3,0) -- (5,0) -- (5,2) -- (4.5, 2) -- (4.5,2.5) -- (4,2.5)
 	--	(4,3) -- (3,3) -- (3,3.5) -- (4.5, 3.5) -- (4.5,4) -- (5,4) -- (5,5) --
 	 (0.5, 5)-- (0.5,4) -- (1,4) --  (1,2) -- (0, 2) -- (0,0);

\draw  [white](4.5,1) -- (5.5,0.5); 	 
\draw  [dashed](4.5,1) -- (5.5,0.5);
\draw (5.5,0.5) node[right]  {stable set $T$};

\path [fill=lightgray] (5,2) -- (4.5, 2) -- (4.5,2.5) -- (4,2.5)
 	--	(4,3) -- (3,3) -- (3,3.5) -- (4.5, 3.5) -- (4.5,4) -- (5,4) -- (5,2);

\draw  [white](4.5,3) -- (5.5,3);					
\draw  [dashed](4.5,3) -- (5.5,3);
\draw (5.5,3) node[right]  {indentation $U$};


\draw [fill] (5,2) -- (4.5, 2) -- (4.5,2.5) -- (4,2.5)
 	--	(4,3) -- (3,3) -- (3,3.5) -- (4.5, 3.5) -- (4.5,4) -- (5,4)	
 	-- (5,4.5) -- (4, 4.5) -- (4, 4) -- (2.5, 4) -- (2.5,2.5) -- (3.5, 2.5)
 	-- (3.5,2) -- (4,2) -- (4, 1.5) -- (5,1.5) -- (5,2);	
 
\draw [white](4.5, 4.25) -- (5.5, 4.5);  
\draw [dashed](4.5, 4.25) -- (5.5, 4.5); 
\draw (5.5,4.5) node[right]  {set $S \subseteq T$};
 
  \draw [ultra thick] (0,0) rectangle (5,5);

 \draw [ultra thick] (0,0) -- (2,0) -- (2,1) -- (3,1) -- 
 	(3,0) -- (5,0) -- (5,2) -- (4.5, 2) -- (4.5,2.5) -- (4,2.5)
 	--	(4,3) -- (3,3) -- (3,3.5) -- (4.5, 3.5) -- (4.5,4) -- (5,4) -- (5,5) --
 	 (0.5, 5)-- (0.5,4) -- (1,4) --  (1,2) -- (0, 2) -- (0,0);
 	 
\draw [white](0,3.5) -- (-0.5,3);
\draw [dashed](0,3.5) -- (-0.5,3);
\draw (-0.5,3) node[left]  {Border of the grid};

\draw  [white](1,3) -- (-0.5,2);
\draw  [dashed](1,3) -- (-0.5,2);
\draw (-0.5,2) node[left]  {border of $T$};

\end{tikzpicture}
\end{center}
\caption{Example of a `giant' stable set $T$}
\label{fig:GiantSet}
\end{figure}

Consider an indentation $U$ in $T$ and let $S \subseteq T$ be the set of pieces along the border of $U$ (see Figure \ref{fig:GiantSet}). Note that $U$ has four more concave than convex corners (Lemma \ref{lemma:observations}, part $(i)$). Moreover, there are at most three corners of $U$ which touch the boundary of the grid and thus there are more concave corners lying `inside' of the grid than the total number of convex corners of $U$. Since every such concave corner `seen' from the set $U$ is a convex corner `seen' from the set $S$ we conclude that $S$ contains more convex than concave corners along its border with $U$. A similar argument as in the previous case then shows that the set $S$ is of size at most $s(\eps)$ as otherwise there exists either a valid $\mathtt{straightline}$ or a valid $\mathtt{convexcorners}$ configuration. Therefore, every indentation $U$ is of size at most $s(\eps)^2$.
Similarly, the set $S$ of pieces around a hole is of size at most $s(\eps)$, and thus every hole of $T$ has to be of size at most $s(\varepsilon)^2$. However, since there are no valid $\mathtt{hole}$ configurations, we conclude that $T$ has no hole at all. 
Therefore, the pieces of an indentation (like $U$ in Figure \ref{fig:GiantSet}) cannot originate from the `inside' of the grid, but only from a zone around the border of the grid of depth at most $s(\varepsilon)^2$. Finally, the existence of such indentations is excluded since there are no valid $\mathtt{indentation}$ configurations. Therefore, since a stable set $T$ touches all four borders of the grid and has no holes and indentations, we conclude that $T$ is the whole grid $G$.

To summarise, we showed that a valid reconstruction which is not identical to $G$ (up to rotation of the whole grid) contains only stable sets of size smaller than $s(\varepsilon)^2$. However, since there are no valid $\mathtt{subsquare}$ configurations we in turn conclude that no $K(\varepsilon) \times K(\varepsilon)$  subsquare in a valid reconstruction contains a new edge (as otherwise it would give a valid $\mathtt{subsquare}$ configuration). By shifting a subsquare of size $K(\varepsilon) \times K(\varepsilon)$ through the whole puzzle we thus deduce that $E_H$ contains no new edge at all. That is, $H$ is identical to the original puzzle $G$ up to rotation of the whole grid.
\end{proof}

To finish, let us make two observations about possible extensions of the proof. First, we see no reason why our arguments should not generalize to higher dimensions. Second, note that the above proof for the upper bound can be slightly adapted so that everything holds for suitable (though very slowly decreasing) $\eps = o(1)$ as well. It would be interesting to determine this more precisely. In particular, it is tempting to conjecture that actually $q = n\cdot \omega(n)$ suffices for  unique reconstruction, where $\omega(n)$ is some slowly growing function in $n$. 

\bibliographystyle{plain}
\bibliography{JigsawPuzzle}

\end{document}